\documentclass[11pt]{article}
\usepackage[utf8]{inputenc}
\usepackage{hyperref}
\usepackage{ifthen}
\usepackage{enumerate,fullpage}
\usepackage{amssymb}
\usepackage{amsmath}
\usepackage{amsthm}
\usepackage{graphicx}
\usepackage{version}
\usepackage{multirow}
\usepackage{float}
\usepackage{epstopdf}
\usepackage{caption}
\usepackage{dcolumn}
\usepackage{algorithm}
\usepackage{algorithmic}
\usepackage{setspace}
\usepackage{makecell}
\usepackage{caption}
\usepackage{natbib}
\usepackage{dsfont}

\newcounter{statement}

\newtheorem{definition}[statement]{Definition}

\newtheorem{proposition}[statement]{Proposition}
\newtheorem{remark}[statement]{Remark}

% Keywords command
\providecommand{\keywords}[1]
{
  \small        
  \textbf{\textbf{Keywords:}} #1
}

\hypersetup{
        bookmarks=true,         % show bookmarks bar?
        unicode=false,          % non-Latin characters in Acrobat’s bookmarks
        pdftoolbar=true,        % show Acrobat’s toolbar?
        pdfmenubar=true,        % show Acrobat’s menu?
        pdffitwindow=false,     % window fit to page when opened
        pdfstartview={FitH},    % fits the width of the page to the window
        pdftitle={Enabling Trade-offs in Machine Learning-based Matching for Refugee Resettlement},    % title
        pdfauthor={Nils Olberg},     % author
        pdfsubject={},   % subject of the document
        pdfcreator={},   % creator of the document
        pdfproducer={}, % producer of the document
        pdfkeywords={}, % list of keywords
        pdfnewwindow=true,      % links in new PDF window
        colorlinks=true,       % false: boxed links; true: colored links
        linkcolor=black,          % color of internal links (change box color with linkbordercolor)
        citecolor=black,        % color of links to bibliography
        filecolor=magenta,      % color of file links
        urlcolor=cyan           % color of external links
}

\pagestyle{plain}

\begin{document}

\title{Enabling Trade-offs in Machine Learning-based Matching\\for Refugee Resettlement\footnote{The research presented in this paper was performed in early 2019. We thank the participants of the INFORMS Workshop on Market Design 2019 for their feedback. Note that the paper by \cite{acharya_combining_2022} studies the same problem and obtains similar results. The papers were written independently from each other, and we only found out about the existence of the paper by \cite{acharya_combining_2022} in July 2019 after we started circulating the first version of the present paper online. However, given the large overlap between the two papers and given that the work by \cite{acharya_combining_2022} was already more mature than ours, we decided to retire this research project in the fall of 2019.}}

\author{Nils Olberg\\University of Zurich\\olberg@ifi.uzh.ch 
\and Sven Seuken\\University of Zurich\\seuken@ifi.uzh.ch}

\date{}

\maketitle

\spacing{1.5}

\begin{abstract}
The Swiss State Secretariat for Migration recently announced a pilot project for a machine learning-based assignment process for refugee resettlement. This approach has the potential to substantially increase the overall employment rate of refugees in Switzerland. However, the currently proposed method ignores families' preferences. In this paper, we build on this prior work and propose two matching mechanisms that additionally take families' preferences over locations into account. The first mechanism is strategyproof while the second is not but achieves higher family welfare. Importantly, we parameterize both mechanisms, giving placement officers precise control how to trade off family welfare against overall employment success. Preliminary simulations on synthetic data show that both mechanisms can significantly increase family welfare even with only a small loss on the overall employment rate of refugees.
\end{abstract}

\keywords{refugee resettlement, matching markets, machine learning, linear programming}

\section{Introduction}

Refugee families seeking shelter in Switzerland are currently assigned randomly to one of the 26 Swiss cantons according to a distribution key unless their asylum claim is rejected within three months after arrival. This practice ignores potential synergies between refugees and cantons, suggesting suboptimal integration outcomes. \cite{bansak_improving_2018} proposed a machine learning-based algorithm for family placement that aims to optimize the overall employment rate of refugees. The algorithm, which could increase the employment rate of refugees in Switzerland by about 73\%, works in three stages. In the first stage, a machine learning model predicts, for each refugee-location pair, the probability that a refugee will find employment at the corresponding location. In the second stage, these individual refugee probabilities are transformed to a family-level metric. In the last stage, solving an optimization problem provides the final assignment.
The Swiss State Secretariat for Migration (SEM), which is responsible for assigning refugees to cantons, has recently announced to test this machine learning-based assignment process in a pilot project.

Although finding employment is important for successful integration, there are many reasons why refugee resettlement procedures should also enable families to express their individual preferences over resettlement locations, even if these preferences are potentially in conflict with employment success. \cite{delacretaz_matching_2020} argue that refugee families themselves know best where they are likely to thrive. \cite{jones_local_2018} point out that resettlement systems which ignore families' preferences are disempowering for refugees, and suggest that giving them a say in the assignment process could increase their well-being. As described by \cite{jones_matching_2017}, ignoring families' individual preferences has even caused families seeking shelter in Finland to cancel their asylum applications. This is tragic, considering the reasons why families abandon their home countries.

To address this issue, we propose two mechanisms that build on the machine learning-based approach by \cite{bansak_improving_2018}, but now also take family welfare into account (in terms of the refugees' reported preferences). We show that our first mechanism, the \emph{constrained random serial dictatorship mechanism (CRSD)} is family-strategyproof, i.e., families cannot benefit from misreporting their true preferences. The second mechanism, the \emph{constrained rank value mechanism (CRV)}, is not strategyproof, but is always  weakly better in terms of family welfare. While prior work has already considered incorporating refugee preferences into the resettlement process (e.g., \cite{delacretaz_matching_2020}) the primary innovation of our paper is that the mechanisms we design are parameterized, giving placement officers precise control over the impact of families' preferences on the final matching. To the best of our knowledge, prior mechanisms for refugee resettlement were either optimization-based or preference-based. Our mechanisms combine both concepts.
In our simulations (using synthetic "proof-of-concept" data), we show that, for both mechanisms, the trade-off parameter can be chosen in such a way that family welfare is significantly improved (compared to the one-sided assignment mechanism) with only a minimal loss on overall employment success. 

\section{Preliminaries}\label{sec:model}

We consider a set of refugee families $F$ and a set of locations $L$. A distribution key (which typically depends on the population of each location) determines a quota $q_j \in \mathbb{N}$ for each location $j \in L$, which is the number of families that location $j$ is obligated to host. We assume that $q = (q_j)_{j \in L}$ is chosen in such a way that each family can be assigned to exactly one location, i.e., it holds that $\sum_{j \in L} q_j = |F|$. Location $j$ derives its preference for family $i$ from a predicted integration success $\pi_{ij} \in [0,1]$. We assume that the predicted values for integration success are provided to us by machine learning models trained on historic resettlement data. As is done in \cite{bansak_improving_2018} and \cite{ahani_placement_2021}, we assume that  $\pi_{ij}$ corresponds to the predicted probability that family $i$ will find employment at location $j$.
For now, we assume that family $i \in F$ has a weak preference order $\succeq_i$ over locations in $L$. In the appendix we present an extension for the case of incomplete preferences. Let $\succeq_{ij}$ denote the position of location $j\in L$ in preference order $\succeq_i$. 
A matching is a mapping $\mu: F \rightarrow L \cup \{\emptyset\}$, and $F(\mu) = \{ i \in F \mid \mu(i) \neq \emptyset \} \subseteq F$ is the set of assigned families under matching $\mu$. Further, let $F_j(\mu) = \{ i \in F \mid \mu(i) = j \}$ denote the set of all families matched to location $j$ under $\mu$. A matching is feasible if $|F_j(\mu)| = q_j$ for all $j \in L$.

We call the expected number of successfully integrated families, i.e., $z(\mu) = \sum\limits_{i \in F} \pi_{i\mu(i)}$, the \textit{government objective} of $\mu$. A feasible matching $\mu^*$ is \textit{government-optimal} if $\mu^* \in \arg\max_{\mu} \{ z(\mu) \}$. Our goal is to find a feasible matching $\mu$ that (1) maximizes family welfare in terms of reported preferences $\succeq$ $=$ $(\succeq_i)_{i \in F}$ and at the same time (2) ensures that $z(\mu)$ is within a factor of $\alpha$ of $z(\mu^*)$ for a previously chosen $\alpha \in [0,1]$.

\begin{definition}
Let $\alpha \in [0,1]$. A feasible matching $\mu$ is an $\alpha$-approximation of the government-optimal matching if $z(\mu) \ge \alpha z^*$, where $z^* = \max_{\mu} \{ z(\mu) \}$.
\end{definition}

We measure family welfare using the average rank $\rho(\mu) = \frac{1}{|F|} \sum_{i \in F} \succeq_{i \mu(i)}$ of matching $\mu$.
Additionally, we use the \emph{cumulative rank distribution} $\Delta(\mu)$, which provides more detailed information about the goodness of matching $\mu$ in terms of family welfare than $\rho(\mu)$.

\begin{definition}
Let $\delta_k(\mu)$ denote the number of families that are assigned to their $k$-th choice under matching $\mu$. The cumulative rank distribution of $\mu$ is a vector $\Delta(\mu) \in \mathbb{N}^{|L|}$, where the $k$-th entry $\Delta_k(\mu)$ denotes the number of families that are matched to their $k$-th or better choice under $\mu$, i.e., $\Delta_k(\mu) = \sum_{k' = 1}^k \delta_k(\mu)$. It holds that $\Delta_{|L|}(\mu) = |F_\mu^*|$.
\end{definition}

\begin{remark}
Requiring refugees to report a preference order over locations (or even assuming that they have preformed preferences) may be problematic in some countries. For example, there can be hundreds of potential resettlement locations for a family seeking shelter in the United Kingdom (\cite{jones_local_2018}). It is hard to imagine that refugees can come up with a complete preference order over that many options. In Switzerland, however, refugees can only be matched to one out of 26 cantons. Further, note that we are not requiring families to have complete preference orders over locations (see appendix). In practice, families could be provided with information on cantons after arrival, which could help them to form at least incomplete preference orders.
\end{remark}

\begin{remark}
We agree with \cite{jones_local_2018} that efficiency and strategyproofness are more important than stability in the context of refugee resettlement mechanisms. Therefore, we will not analyze our mechanisms in terms of stability in this paper.\end{remark}

\section{Mechanisms}

In the following subsections we describe two mechanisms that consider families' preferences. The first mechanism, the \textit{constrained random serial dictatorship mechanism} (CRSD), is family-strategyproof while the second, the \textit{constrained rank value mechanism} (CRV), is not. However, CRV will usually achieve higher family welfare than CRSD. 
Both mechanisms allow placement officers to choose a parameter $\alpha \in [0,1]$, which ensures that the computed matching is an $\alpha$-approximation of the government-optimal matching.  

\subsection{Constrained Random Serial Dictatorship Mechanism}

\begin{algorithm}[t]
    \caption{Constrained Random Serial Dictatorship (CRSD)}
    \label{alg:crsd}
    \begin{algorithmic}[1]
    \REQUIRE $F$, $L$, $\succeq$, $\pi$, $b$, $\alpha$
    \STATE $\mu(i) := \emptyset$ for all $i \in F$.
    \STATE Compute the objective value $z^*$ of an optimal solution to IP$_\mu$(\ref{ip:crsd}).\label{line-CRSD-ip}
    \STATE $Q := F$
    \WHILE{$Q \neq \emptyset$}
                \STATE Remove randomly chosen $i \in Q$ from $Q$.
                \WHILE{$\succeq_i \neq \emptyset$}\label{line:while-incomplete}
                        \STATE Let $j$ denote $i$'s current top choice in $\succeq_i$ and remove $j$ from $\succeq_i$.
                        \IF{$|F_j(\mu)| < q_j$}\label{line-CRSD-check-feasible}
                                \STATE $\mu' := \mu $; $\mu'(i) := j$
                                \STATE Solve IP$_{\mu'}$(\ref{ip:crsd}) and let $z'$ denote the objective value of the solution.\label{line-CRSD-ip2}
                                \IF{$z' \ge \alpha z^*$}
                                        \STATE $\mu(i) := j$
                                        \STATE \textbf{break}
                                \ENDIF
                        \ENDIF 
                \ENDWHILE
    \ENDWHILE
    \ENSURE $\mu$
\end{algorithmic}
\end{algorithm}

The constrained random serial dictatorship mechanism (CRSD) is a constrained version of the well-known random serial dictatorship mechanism. The general idea of CRSD is to let a family only choose their match from the set of remaining locations if it can be guaranteed that an $\alpha$-approximation of the government-optimal matching is still achievable. Algorithm \ref{alg:crsd} provides a detailed description of the mechanism.
 
Initially, the algorithm computes the objective value $z(\mu^*)$ of a government-optimal solution $\mu^*$. Afterwards, it initializes an (infeasible) empty matching $\mu$. Families are then sorted in a random order and processed sequentially. When it is family $i$'s turn to choose among the remaining locations, $i$ is only assigned to $j$ under $\mu$ if it can be guaranteed that an $\alpha$-approximation of $\mu^*$ is still achievable.

The integer program IP(\ref{ip:crsd}) has to be solved in Line \ref{line-CRSD-ip} and Line \ref{line-CRSD-ip2} of the mechanism. IP(\ref{ip:crsd}) ensures that the intermediate matching $\mu$ is preserved. If the objective value of an optimal solution to IP(\ref{ip:crsd}) is smaller than $\alpha z(\mu^*)$, then $i$ is not allowed to be matched to $j$.

\begin{flalign}
\text{IP}_\mu \text{(\ref{ip:crsd}):}& & \text{maximize } & \displaystyle \sum_{i \in F} \sum_{j \in L} \pi_{ij} x_{ij} &\\
&  &\text{subject to } & \displaystyle \sum\limits_{i \in F} x_{ij} = q_j & \forall j \in L \label{ip:crsd-capacity} &\\
                                 & & & \displaystyle \sum\limits_{j \in L} x_{ij} = 1 & \forall i \in F \label{ip:crsd-assigned} &\\
                 & & & x_{i\mu(i)} = 1 & \forall i \in F(\mu) \label{ip:crsd-preserve} &\\
                 & & & x_{ij} \in \{0,1\} & \forall i \in F \text{, } \forall j\in L &
\end{flalign}\label{ip:crsd}

An optimal solution of IP(\ref{ip:crsd}) induces a matching that maximizes the overall predicted employment rate. Variable $x_{ij}$ indicates whether family $i$ will be assigned to location $j$.
Constraints (\ref{ip:crsd-capacity}) ensure that every location hosts as many families as required for a feasible matching. Constraints (\ref{ip:crsd-assigned}) guarantee that every family is assigned to exactly one location. Constraints (\ref{ip:crsd-preserve}) preserve the intermediate matching $\mu$.

\begin{proposition}
CRSD is family-strategyproof, and the matching computed by CRSD is an $\alpha$-approximation of the government-optimal matching.
\end{proposition}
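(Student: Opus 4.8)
The plan is to prove the two claims separately, handling the $\alpha$-approximation first since its proof yields a feasibility invariant that is reused for strategyproofness. Throughout, write $\mathrm{OPT}(\mu)$ for the optimal value of IP$_\mu$(\ref{ip:crsd}) (and $-\infty$ if that program is infeasible), and for a partial matching $\mu$ with $i \notin F(\mu)$ let $\mu \cup \{i \mapsto j\}$ be $\mu$ extended by $\mu(i) := j$. I would establish the invariant that, after any number of iterations of the outer \textbf{while} loop, (i) every family already removed from $Q$ is assigned under the current $\mu$ and (ii) $\mathrm{OPT}(\mu) \ge \alpha z^*$. The base case is immediate: before the loop $\mu$ is empty, so IP$_\mu$(\ref{ip:crsd}) is exactly the government-optimal program and $\mathrm{OPT}(\mu) = z^* \ge \alpha z^*$ since $\alpha \le 1$. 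For the inductive step, let $i$ be processed with current $\mu$ satisfying the invariant, and let $x$ be an optimal solution of IP$_\mu$(\ref{ip:crsd}). By (\ref{ip:crsd-assigned}), $x$ assigns $i$ to exactly one location $j^{*}$; since $i \notin F(\mu)$, (\ref{ip:crsd-capacity}) and (\ref{ip:crsd-preserve}) force $|F_{j^{*}}(\mu)| < q_{j^{*}}$, and $x$ stays feasible for IP$_{\mu \cup \{i \mapsto j^{*}\}}$(\ref{ip:crsd}), so $\mathrm{OPT}(\mu \cup \{i \mapsto j^{*}\}) \ge \alpha z^*$. Because $\succeq_i$ is complete, the inner loop reaches $j^{*}$ unless $i$ was already assigned to an earlier choice (for which the test $z' \ge \alpha z^*$ had passed), and at $j^{*}$ both tests — Line \ref{line-CRSD-check-feasible} and $z' \ge \alpha z^*$ — pass; so $i$ is assigned, to some $j$ with $\mathrm{OPT}(\mu \cup \{i \mapsto j\}) \ge \alpha z^*$, re-establishing the invariant. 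At termination every family is assigned and no location ever exceeded its quota (Line \ref{line-CRSD-check-feasible}), so $\sum_j |F_j(\mu)| = |F| = \sum_j q_j$ forces $|F_j(\mu)| = q_j$ for all $j$; hence $\mu$ is feasible, IP$_\mu$(\ref{ip:crsd}) has $\mu$ as its unique feasible point, and $z(\mu) = \mathrm{OPT}(\mu) \ge \alpha z^*$.

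For family-strategyproofness I would run the textbook serial-dictatorship argument, whose only new ingredient is that the IP tests never look at reported preferences. Fix a family $i$, the realized priority order $\sigma$, and the reports $\succeq_{-i}$ of the others. By induction along $\sigma$, the location assigned to each family processed before $i$ depends only on $\sigma$, $\succeq_{-i}$, $\pi$, $q$, $\alpha$, $z^*$ and not on $i$'s report, since preferences enter solely through each family walking down its own list while the feasibility tests following Line \ref{line-CRSD-check-feasible} depend only on the current partial matching and $\pi$; hence the partial matching $\mu^{(i)}$ present when $i$ is processed, and the resulting menu $A_i = \{\, j \in L : |F_j(\mu^{(i)})| < q_j \text{ and } \mathrm{OPT}(\mu^{(i)} \cup \{i \mapsto j\}) \ge \alpha z^* \,\}$, are independent of $i$'s report. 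The inner loop matches $i$ to the first location of its reported list lying in $A_i$, and $A_i \neq \emptyset$ by the invariant above (an optimal solution of IP$_{\mu^{(i)}}$(\ref{ip:crsd}) exhibits such a location), so $i$ always ends up in $A_i$. Hence truthful reporting delivers $i$'s $\succeq_i$-best element of $A_i$, which weakly dominates any misreport; as this holds for every realization of $\sigma$, CRSD is family-strategyproof.

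The argument is elementary, but the step needing care is the inductive claim that $\mu^{(i)}$, and therefore $A_i$, is invariant under changes to $i$'s report: this rests entirely on IP$_\mu$(\ref{ip:crsd}) being preference-blind, and one must check that nothing in the control flow — the processing order, the \textbf{break} statements, or the single up-front computation of $z^*$ — secretly reintroduces a dependence on $\succeq_i$. A minor loose end is the handling of ties in the weak preference orders and the rule for picking a family's current top choice, but any fixed consistent tie-breaking leaves the reasoning intact.
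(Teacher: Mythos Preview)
Your proof is correct and follows essentially the same approach as the paper's: the $\alpha$-approximation via the invariant that an $\alpha$-feasible completion always exists, and strategyproofness via the standard serial-dictatorship argument that the menu of available locations when it is $i$'s turn is independent of $i$'s report. Your version is considerably more detailed, in particular making explicit why every family does get assigned and why the final matching is feasible, but the underlying ideas are identical.
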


\begin{proof}
Because of the feasibility check in Line \ref{line-CRSD-ip2}, we know that at each step of the algorithm there exists a feasible $\alpha$-approximation $\mu'$ that preserves the intermediate matching $\mu$. Thus, the final matching is an $\alpha$-approximation of the government-optimal matching.
Further, CRSD is family-strategyproof because family $i$ cannot influence which locations will remain available to it once it is $i$'s turn to choose, and by stating its true preferences it is guaranteed that the best among the remaining locations is chosen.
\end{proof}

\begin{remark}[Computational Complexity]
In its original formulation, IP(\ref{ip:crsd}) boils down to a maximum-weight matching problem. Finding a solution to this problem can be done in polynomial time, e.g., using the Hungarian method (\cite{kuhn_hungarian_1955}). As we will see in Section \ref{add_constraints}, additional constraints (e.g., service constraints, capacity constraints, etc.) could easily be integrated in the CRSD mechanism. However, this transforms the maximum-weight matching problem into a NP-hard problem, which can significantly increase the overall runtime of the algorithm.
\end{remark}

\subsection{Constrained Rank Value Mechanism}

Before introducing the constrained rank value mechanism (CRV), we need to establish the concept of a rank value function. 
Along the lines of \cite{featherstone_rank_2020}, we use rank value functions to assign values between 0 and 1 to positions in preference orders.

\begin{definition}
A rank value function is a mapping $v: \{1,...,|L|\} \rightarrow [0,1]$ that is monotonically decreasing.
\end{definition}

\begin{definition}
Given a rank value function $v$, a set of families $F$, a set of locations $L$, quotas $q$, preference orders $\succeq$, predicted employment probabilities $\pi$, and a lower bound $\gamma$, the \textit{constrained maximum rank value problem (CMRV)} is to find a feasible matching $\mu$ that maximizes $\sum_{i \in F} \sum_{j \in L} v(\succeq_{ij}) x_{ij}$, such that $z(\mu) \ge \gamma$.
\end{definition}

By solving an instance of CMRV with $\gamma = \alpha z(\mu^*)$, we can find an $\alpha$-approximation of the government-optimal matching that maximizes family welfare in terms of $v$.

\begin{proposition}
The constrained maximum rank value problem is NP-hard.
\end{proposition}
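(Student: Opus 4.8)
The plan is to give a polynomial-time reduction from \textsc{Subset-Sum} (equivalently \textsc{Partition}), which is NP-complete; a \textsc{Subset-Sum} instance consists of positive integers $w_1,\dots,w_n$ and a target $W \le \sum_i w_i$, and asks whether some $S \subseteq \{1,\dots,n\}$ has $\sum_{i \in S} w_i = W$. From such an instance I would build a CMRV instance with $2n$ families $f_1,\dots,f_n,f_1',\dots,f_n'$ and $2n$ unit-quota locations $a_1,\dots,a_n,b_1,\dots,b_n$ (so $\sum_j q_j = 2n = |F|$), where item $i$ is represented by the gadget $f_i,f_i',a_i,b_i$. The intended behaviour is that a feasible matching respecting the gadget structure must, for each $i$, either use configuration $C^1_i$ ($f_i\to a_i$, $f_i'\to b_i$) or configuration $C^2_i$ ($f_i\to b_i$, $f_i'\to a_i$), and that switching to $C^2_i$ trades a loss in rank value against a gain in employment, both proportional to $w_i$: then the employment bound $\gamma$ forces the ``swap set'' $S = \{i : C^2_i\}$ to satisfy $\sum_{i\in S} w_i \ge W$, while maximizing rank value pushes $\sum_{i\in S} w_i$ down toward $W$, so the optimum attains a designated extreme value iff an exact subset exists.

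Concretely, I would put $a_i$ at position $2i-1$ and $b_i$ at position $2i$ in $\succeq_{f_i}$, and $b_i$ at position $2i-1$ and $a_i$ at position $2i$ in $\succeq_{f_i'}$ (remaining locations in arbitrary order), and choose a strictly decreasing rank value function $v$ with $v(2i-1)-v(2i) = c\,w_i$ for a small constant $c$, with tiny drops between consecutive gadgets to keep $v$ decreasing and inside $[0,1]$. Then gadget $i$ contributes rank value $2v(2i-1)$ under $C^1_i$ and $2v(2i)$ under $C^2_i$, a difference of exactly $2c\,w_i$. For predicted probabilities I would set $\pi_{f_i a_i} = \pi_{f_i' b_i} = \lambda$ and $\pi_{f_i b_i} = \pi_{f_i' a_i} = \lambda + \mu w_i$ for small $\lambda,\mu$ with $\lambda + \mu w_i \le 1$, and $\pi_{f,j} = 0$ whenever $f \in \{f_i,f_i'\}$ and $j \notin \{a_i,b_i\}$; so gadget $i$ contributes employment $2\lambda$ under $C^1_i$ and $2\lambda + 2\mu w_i$ under $C^2_i$. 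Taking $\gamma = 2\lambda n + 2\mu W$, a gadget-respecting matching with swap set $S$ is CMRV-feasible iff $\sum_{i\in S} w_i \ge W$, and its rank value equals $2\sum_i v(2i-1) - 2c\sum_{i\in S} w_i$, which attains its maximum possible value $2\sum_i v(2i-1) - 2cW$ precisely when $\sum_{i\in S} w_i = W$. Hence, with rank-value threshold $\beta = 2\sum_i v(2i-1) - 2cW$, the CMRV instance has a feasible matching of rank value at least $\beta$ iff the \textsc{Subset-Sum} instance is a yes-instance; since an algorithm solving CMRV optimally could decide this by comparing the optimum to $\beta$, CMRV is NP-hard.

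It then remains to rule out matchings that break the gadget structure. Since $\pi$ is $0$ on all cross pairs, and since sending $f_i$ out of its gadget leaves a slot in $\{a_i,b_i\}$ that can only be filled by a family from another gadget, any matching with a cross-match has at least two families earning employment $0$, hence total employment at most $(2n-2)(\lambda + \mu\max_i w_i)$. Choosing $\lambda$ bounded away from $0$ and $\mu$ small enough that $(2n-2)(\lambda + \mu\max_i w_i) < 2\lambda n \le \gamma$ (e.g.\ $\lambda = 3/4$ and $\mu = \Theta(1/(n\sum_i w_i))$) makes every such matching CMRV-infeasible, so the feasible set is exactly the gadget-respecting matchings analyzed above; one also checks $S=\{1,\dots,n\}$ is feasible so the instance is nontrivial. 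Finally, all rationals ($v$, $\pi$, $\gamma$, $\beta$) are $O(n)$-fold sums of quantities of size $\Theta(1/\sum_i w_i)$, hence have polynomially many bits even under binary encoding, so the reduction runs in polynomial time.

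I expect the main obstacle to be exactly the design of $v$. The ``obvious'' gadget in which $C^1_i$ gives two first choices and $C^2_i$ two second choices makes the rank-value loss per swapped item a fixed constant, so the induced problem becomes ``minimum-cardinality subset with weight at least $\gamma'$'', which is solvable in polynomial time by greedily taking the heaviest items; the reduction therefore genuinely needs a rank value function whose per-position drops encode the individual weights $w_i$, and this must be arranged while simultaneously keeping $v$ monotone and within $[0,1]$, leaving room in each preference order for the other locations, and ensuring via the choice of $\lambda,\mu,\gamma$ that no unstructured matching can ever be both feasible and competitive.
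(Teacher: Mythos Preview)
Your reduction is correct and follows essentially the same blueprint as the paper's: build $2n$ unit-capacity locations and $2n$ families paired into $n$ gadgets, use the $\gamma$-constraint (via zero employment on all cross pairs) to force every feasible matching to be gadget-respecting, and then arrange that within gadget $i$ the two admissible configurations trade a gain in employment against a loss in rank value proportional to the $i$-th item's weight.

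The differences are in the packaging rather than the idea. The paper reduces from \textsc{Knapsack}, using the item \emph{values} $w_i$ in the rank-value function and the item \emph{sizes} $a_i$ in the employment probabilities; you reduce from \textsc{Subset-Sum} and therefore load the same weights $w_i$ on both sides, which is slightly more parsimonious. The paper's preference structure is also a little simpler: all $f_i$ share a single order $\ell_1 \succ \dots \succ \ell_n \succ \bar\ell_1 \succ \dots \succ \bar\ell_n$, so $\ell_i$ sits at position $i$ and one can take $v(i)=w_i/2$ directly (after sorting the $w_i$), whereas you place $a_i,b_i$ at positions $2i-1,2i$ and encode $w_i$ as the drop $v(2i-1)-v(2i)$. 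Both encodings achieve the same thing, and your closing observation---that a naive ``first choice vs.\ second choice'' gadget collapses to a greedy-solvable problem, so the rank-value function must carry the per-item weights---is exactly the point that makes either construction work.
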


\begin{proof}
Suppose that we are given an instance $I=(N,w,a,b)$ of the Knapsack problem, where $N=\{1,...,n\}$ is the set of items, $w_i \in \mathbb{R}_{\ge 0}$ is the value for item $i$, $a_i \in \mathbb{R}_{\ge 0}$ is the size of item $i$, and $b \in \mathbb{R}_{\ge 0}$ is the capacity of the knapsack. Construct a CMRV instance $\hat I=(F,L,q,\pi,\succeq,\gamma)$ as follows. Let $F=\{f_1,...,f_n\} \cup \{\bar f_1,...,\bar f_n\}$ and $L=\{\ell_1,...,\ell_n\} \cup \{\bar\ell_1,...,\bar\ell_n\}$. Without loss of generality assume that $w_1 \ge ... \ge w_n$ and $\sum_{i=1}^n a_i = \frac{1}{4n}$. Further, assume that $b < \frac{1}{4n}$. Otherwise we would have a trivial instance where the optimal solution is to put all items in the knapsack. 
Set the rank value function to be 
$$v(i)=
\begin{cases}
w_i / 2 & 1 \le i \le n\\
0 & n+1 \le i \le 2n
\end{cases}.$$
All families $f_i$ have the same preference order
$$\ell_1 \succeq_{f_i} ... \succeq_{f_i} \ell_n \succeq_{f_i} \bar\ell_1 \succeq_{f_i} ... \succeq_{f_i} \bar\ell_n,$$ 
and all families $\bar f_i$ have the preference order 
$$\bar\ell_1 \succeq_{\bar f_i} ... \succeq_{\bar f_i} \bar\ell_n \succeq_{\bar f_i} \ell_1 \succeq_{\bar f_i} ... \succeq_{\bar f_i} \ell_n.$$ 
Set $\pi_{f_i \bar\ell_i} = a_i + \frac{1}{4n}$ and $\pi_{f_i \ell_i} = \pi_{\bar f_i \ell_i} = \pi_{\bar f_i \bar\ell_i} = \frac{1}{4n}$ for all $i \in N$. For all other family-location pairs $(i,j)$ set $\pi_{ij} = 0$. Finally, let each location have a capacity of $1$ and choose $\gamma = \frac{2n + 1}{4n} - b$. 

Note that, due to the choice of $\pi$, $f_i$ can only be matched to either $\ell_i$ or $\bar\ell_i$ in any feasible solution for $\hat I$. Otherwise the $\gamma$-constraint would be violated. The same holds for $\bar f_i$. Further, because of the capacity constraints, for any feasible matching $\mu$ it holds that $ \mu(f_i) = \ell_i \Leftrightarrow \mu(\bar f_i) = \bar \ell_i.$

Let $\mu^*$ denote an optimal solution for $\hat I$. Using $\mu^*$, we can obtain an optimal solution $x[\mu^*]\in \{0,1\}^n$ for $I$, where $x[\mu^*]_i = 1$ corresponds to item $i$ being placed in the knapsack, by setting $x[\mu^*]_i = 1$ if and only if $\mu^*(f_i) = \ell_i$ for all $i$. We call $x[\mu^*]$ the solution induced by $\mu^*$.
We now show that $x[\mu^*]$ is an optimal solution for $I$.

First, observe that a matching $\mu$ is a feasible solution for $\hat I$ if and only if the induced solution $x[\mu]$ is feasible for $I$:
\begin{align}
& & \sum_{i=1}^n \pi_{i \mu(i)} & \ge \gamma \label{proof_feasible_crvm_start} \\
& \Leftrightarrow & \sum_{i=1}^n \frac{1}{4n} + \frac{1}{4n} + a_i (1-x[\mu]_i) & \ge \frac{2n+1}{4n} - b\\
& \Leftrightarrow & \sum_{i=1}^n - a_i x[\mu]_i & \ge - b\\
& \Leftrightarrow & \sum_{i=1}^n a_i x[\mu]_i & \le b. \label{proof_feasible_crvm_end}
\end{align}

Second, for any feasible matching $\mu$ it holds that
\begin{align}
z(\mu) & = \sum_{i=1}^n \frac{w_i}{2} \mathds{1}[\mu(f_i) = \ell_i] + \frac{w_i}{2} \mathds{1}[\mu(\bar f_i) = \bar\ell_i]\\
& = \sum_{i=1}^n w_i \mathds{1}[\mu(f_i) = \ell_i]\\
& = \sum_{i=1}^n w_i x[\mu]_i,
\end{align}

where the second equality comes from the fact that $\mu$ is feasible and thus $\mu(f_i)= \ell_i \Leftrightarrow \mu(\bar f_i)= \bar \ell_i$.
It follows that that $x[\mu^*]$ is an optimal solution for $I$.

\end{proof}

CMRV can be formulated as an integer program.

\begin{flalign} 
\text{IP} \text{(\ref{ip:rv})}:& & \text{maximize } & \displaystyle \sum_{i \in F} \sum_{j \in L} v(\succeq_{ij}) x_{ij} &\\
&  &\text{subject to } & \displaystyle \sum\limits_{i \in F} x_{ij} = q_j & \forall j \in L \label{ip:rv-capacity} &\\
                                 & & & \displaystyle \sum\limits_{j \in L} x_{ij} = 1 & \forall i \in F \label{ip:rv-assigned} &\\
                 & & & \displaystyle \sum\limits_{i \in F} \sum\limits_{j \in L} \pi_{ij} x_{ij} \ge \gamma & \label{ip:rv-approx} &\\
                 & & & x_{ij} \in \{0,1\} & \forall i \in F \text{, } \forall j\in L &                 
\end{flalign}\label{ip:rv}

The objective function of IP(\ref{ip:rv}) maximizes family welfare in terms of the rank value function $v$.
Analogous to IP(\ref{ip:crsd}), Constraints (\ref{ip:rv-capacity}) ensure that every location hosts as many families as required for a feasible matching, and Constraints (\ref{ip:rv-assigned}) ensure that every family is assigned to exactly one location. Constraint (\ref{ip:rv-approx}) is required to guarantee that $z(\mu) \ge \gamma$.

Suppose that we have a predefined rank value function $v$, e.g., $v(k) = \frac{1}{k}$.
The constrained rank value mechanism, described in Algorithm \ref{alg:rv}, computes a family-optimal matching $\mu$ according to $v$, such that $z(\mu) \ge \alpha z(\mu^*)$.

\begin{algorithm}[h]
    \caption{Constrained Rank Value Mechanism (CRV)}
    \label{alg:rv}
    \begin{algorithmic}[1]
    \REQUIRE $F$, $L$, $\succeq$, $\pi$, $b$, $\alpha$
    \STATE Compute the objective value $z(\mu^*)$ of an optimal solution to IP(\ref{ip:crsd}).
    \STATE Let $\mu$ denote the matching induced by the solution of IP(\ref{ip:rv}) with $\gamma = \alpha z(\mu^*)$.\label{line:solve-rv}
    \ENSURE $\mu$
\end{algorithmic}
\end{algorithm}

In contrast to CRSD, it is possible to construct instances where families can benefit from misreporting their true preference orders under CRV. However, a manipulation strategy is not straightforward since a refugee family would need to have at least some knowledge about the predictions $\pi_{ij}$ of the machine learning models or the government-optimal matching $\mu^*$ and the preference orders of other families. It has to be further investigated whether families could in practice exploit this weakness of CRV.

\begin{proposition}
CRV is not family-strategyproof, and CRV always produces an $\alpha$-approximation of the government-optimal matching.
\end{proposition}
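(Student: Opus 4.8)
The plan is to treat the two claims separately, since they are of very different difficulty. The approximation guarantee is essentially immediate from the structure of Algorithm~\ref{alg:rv}: in Line~\ref{line:solve-rv} the mechanism solves IP(\ref{ip:rv}) with $\gamma = \alpha z(\mu^*)$, so I would first note that this program is always feasible. Indeed, the government-optimal matching $\mu^*$ satisfies the capacity constraints~(\ref{ip:rv-capacity}) and the assignment constraints~(\ref{ip:rv-assigned}) because it is a feasible matching, and it satisfies~(\ref{ip:rv-approx}) because $z(\mu^*) = z^* \ge \alpha z^*$ (we have $\alpha \in [0,1]$ and $z^* \ge 0$, as all $\pi_{ij}$ are nonnegative). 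Hence an optimal solution exists and the output $\mu$ is well defined; its variables satisfy~(\ref{ip:rv-capacity}) and~(\ref{ip:rv-assigned}), so $\mu$ is feasible, and~(\ref{ip:rv-approx}) gives $z(\mu) = \sum_{i \in F}\sum_{j \in L}\pi_{ij}x_{ij} \ge \gamma = \alpha z^*$. By the definition of an $\alpha$-approximation, this is all that is required.

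For the non-strategyproofness claim I would exhibit a concrete instance on which one family strictly benefits from misreporting. The conceptual point is that a family's reported order changes neither $z^*$, nor $\gamma$, nor the feasible region of IP(\ref{ip:rv}); it only changes the objective, so any manipulation amounts to steering a rank-value-maximizing assignment within a fixed feasible set, and such utilitarian max-weight mechanisms are well known to be manipulable. To make the employment constraint irrelevant I would take $\pi_{ij}$ constant over all pairs, so that every feasible matching has the same government objective and~(\ref{ip:rv-approx}) is slack for every $\alpha$; CRV then simply returns a matching maximizing $\sum_{i \in F}\sum_{j \in L} v(\succeq_{ij})x_{ij}$. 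I would then design $F$, $L$, and the true orders so that the \emph{unique} welfare-optimal matching $\mu$ assigns some family $i$ to a location $\ell$ that is neither its best nor its worst, while a competing family ranks $\ell$ strictly worse than $i$ does but still above the rank to which $i$ will demote $\ell$. Family $i$ then reports the order obtained from its true order by moving $\ell$ to the bottom. With a strictly decreasing rank value function (e.g.\ $v(k)=1/k$) this lowers the weight of $\mu$ by exactly $v(\succeq_{i\ell}) - v(|L|)$ while leaving unchanged the weight of a particular alternative matching $\mu'$ that assigns $i$ a location it truly prefers to $\ell$ (that location lies above $\ell$ in $i$'s true order, so its rank is not affected by the demotion, and $\ell$ itself is not assigned to $i$ under $\mu'$). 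Choosing the parameters so that $0 < z_v(\mu) - z_v(\mu') < v(\succeq_{i\ell}) - v(|L|)$, where $z_v$ denotes total rank value, makes $\mu'$ the unique maximizer after the misreport; CRV then outputs $\mu'$, and family $i$ is strictly better off.

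The main obstacle is the bookkeeping in this construction. The single inequality above controls the $\mu$-versus-$\mu'$ comparison, but I also have to certify that $\mu$ really is the \emph{unique} global optimum under truthful reports and that $\mu'$ really is the \emph{unique} global optimum after the misreport, i.e.\ that each beats every other feasible matching; otherwise the outcome would depend on an unspecified tie-breaking rule. Forcing the achievable differences of $v$-values to land strictly inside the required interval requires at least four locations (with three, the relevant differences are too coarse and one of the two matchings is driven into a tie rather than strictly overtaken), so the verification ranges over a modest but nontrivial family of matchings; one convenient way to keep it small is to add two further families who each uniquely top-rank one of the remaining two locations, so that any matching not assigning them those locations is heavily dominated. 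This step is routine once the instance is fixed, and I would simply present the instance together with the two winning matchings and the inequality separating them; I do not expect any conceptual difficulty beyond choosing the numbers.
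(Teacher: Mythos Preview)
Your treatment of the $\alpha$-approximation guarantee matches the paper's (which simply invokes Constraint~(\ref{ip:rv-approx})), with the welcome addition of verifying that IP(\ref{ip:rv}) is feasible because $\mu^*$ itself satisfies all its constraints. For the non-strategyproofness claim the two approaches diverge: the paper gives no construction at all and instead refers the reader to \cite{featherstone_rank_2020}, where unconstrained rank-value mechanisms are already shown to be manipulable. Your plan makes that reduction explicit---taking $\pi_{ij}$ constant collapses CRV to pure rank-value maximization, which is exactly Featherstone's setting---and then builds the counterexample in place rather than by citation. Your route is self-contained and makes transparent \emph{why} the mechanism fails (a family's report shapes the objective of IP(\ref{ip:rv}) but not its feasible region), at the cost of the uniqueness bookkeeping you flag; the paper's route is a one-liner but outsources the substance. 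Both are correct, and your sketch is sound: with four unit-capacity locations, $v(k)=1/k$, two ``anchor'' families each uniquely top-ranking one of the spare locations, and the competing family ranking $\ell$ at position~3, the single inequality $0 < z_v(\mu)-z_v(\mu') < v(2)-v(4)$ is readily arranged and the global-optimality checks go through.
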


\begin{proof}

Constraint (\ref{ip:rv-approx}) guarantees that the matching $\mu$ computed by CRV is an $\alpha$-approximation of the government-optimal matching. To see that CRV is not family-strategyproof, we refer the reader to \cite{featherstone_rank_2020}.
\end{proof}

\subsection{Possible Extensions} \label{add_constraints}

The model introduced in Section \ref{sec:model} is rather simple and does not necessarily capture all constraints imposed on feasible matchings in the real world. As described by \cite{delacretaz_matching_2020}, an agency responsible for refugee resettlement might have to incorporate family sizes, i.e., $q_j$ denotes the number of refugees instead of families a location is obligated to host. These constraints would introduce additional combinatorial complexity to the problem. 
Another potential modification is replacing Constraints (\ref{ip:crsd-capacity}) and Constraints (\ref{ip:rv-capacity}) respectively by capacity constraints, i.e., interpreting $q_j$ as an upper bound.
Similarly, one could introduce additional service constraints, e.g., constraints concerning housing or medical conditions of refugee families, as described by \cite{delacretaz_matching_2020} and \cite{ahani_placement_2021}.
All these restrictions (and others) can easily be incorporated in CRSD and CRV by adding appropriate constraints to the IP formulations IP(\ref{ip:crsd}) and IP(\ref{alg:rv}). 

\section{Simulations}

In order to compare the performance of CRSD and CRV, we run simulations on randomly generated instances. For our simulations, we assume that both mechanisms have access to the true preferences of families. Notice, however, that this assumption might be unreasonable when these mechanisms would be used in real-world applications, especially in the case of CRV since CRV is not family-strategyproof.

All simulations were run on a laptop computer with an Intel(R) Core(TM) i7-8550U CPU 1.80GHz processor and 16GB RAM running Ubuntu 18.04.

\subsection{Instance Generation}\label{generation}

We use the following approach to generate instances. Each instance consists of a total of 100 families and 26 locations. There are four types of refugee families $f_1, f_2, f_3, f_4$ and four types of locations $\ell_1,\ell_2,\ell_3,\ell_4$. The predicted employment probabilities of family types at location types are uniformly distributed according to Table \ref{table:gen-pi}. 

\begin{table}[H]
\begin{center}
\begin{tabular}{ |l|c|c|c|c| }
\hline
\makecell{$\omega^\pi_{f\ell}$} & \makecell{$\ell_1$ (1)} & \makecell{$\ell_2$ (9)} & \makecell{$\ell_3$ (6)} & \makecell{$\ell_4$ (10)}\\
\hline
$f_1 (15)$ & 0.6 & 0.5 & 0.5 & 0.3\\
$f_2 (25)$ & 0.3 & 0.4 & 0.2 & 0.1\\
$f_3 (20)$ & 0.3 & 0.2 & 0.4 & 0.1\\
$f_4 (40)$ & 0.1 & 0.1 & 0.1 & 0.1\\
\hline
\end{tabular}
\end{center}
\caption {Values for generating predicted employment probabilities. For family $i$ of type $f$ and location $j$ of type $\ell$, $\pi_{ij} \sim \mathcal{U}(0, \omega^\pi_{f\ell})$.}\label{table:gen-pi}
\end{table}

The numbers in brackets indicate for each family type (location type) how many families (locations) of that type are present in an instance.
Families of type $f_1$ have high predicted employment probability for each of the four location types except for locations of type $\ell_4$. Type $f_2$ families on the other hand are less likely to be employed in locations of type $\ell_1$, even less likely in locations of type $\ell_3$ and $\ell_4$, and have highest probability of employment for locations of types $\ell_2$. Type $f_3$ is similar to $f_2$, except that for those families the predicted employment probabilities for type $\ell_2$ locations and type $\ell_3$ locations are swapped. Families of type $f_4$ have a low predicted employment probability at all locations.

Locations of type $\ell_1$ ($\ell_2$ and $\ell_3$) are obligated to host 4 (2) times more families than locations of type $\ell_4$.
The preference orders of families are derived from randomly generated valuation functions $(u_i)_{i \in F}$ according to Table \ref{table:gen-val}.

\begin{table}[H]
\begin{center}
\begin{tabular}{ |l|c|c|c|c| }
\hline
\makecell{$\omega^u_{f\ell}$} & \makecell{$\ell_1$ (1)} & \makecell{$\ell_2$ (9)} & \makecell{$\ell_3$ (6)} & \makecell{$\ell_4$ (10)}\\
\hline
$f_1 (15)$ & 1.0 & 0.6 & 0.6 & 0.3\\
$f_2 (25)$ & 0.8 & 1.0 & 0.6 & 0.3\\
$f_3 (20)$ & 0.8 & 0.6 & 1.0 & 0.3\\
$f_4 (40)$ & 1.0 & 0.6 & 0.6 & 0.3\\
\hline
\end{tabular}
\end{center}
\caption {Values for generating preference orders. For family $i$ of type $f$ and location $j$ of type $\ell$, $u_{ij} \sim \mathcal{U}(0, \omega^u_{f\ell})$.}\label{table:gen-val}
\end{table}

\subsection{Mechanism Performances}

We include Top Trading Cycles (TTC) and Deferred Acceptance (DA) as benchmarks in our simulations. Priorities of locations over families are derived by sorting families according to $\pi_{ij}$ in decreasing order.

Figure \ref{fig:average_rank_complete} illustrates the performance of CRSD, CRV, TTC and DA on 20 randomly generated instances.

\begin{figure}[H]
\centering
\includegraphics[scale=0.25]{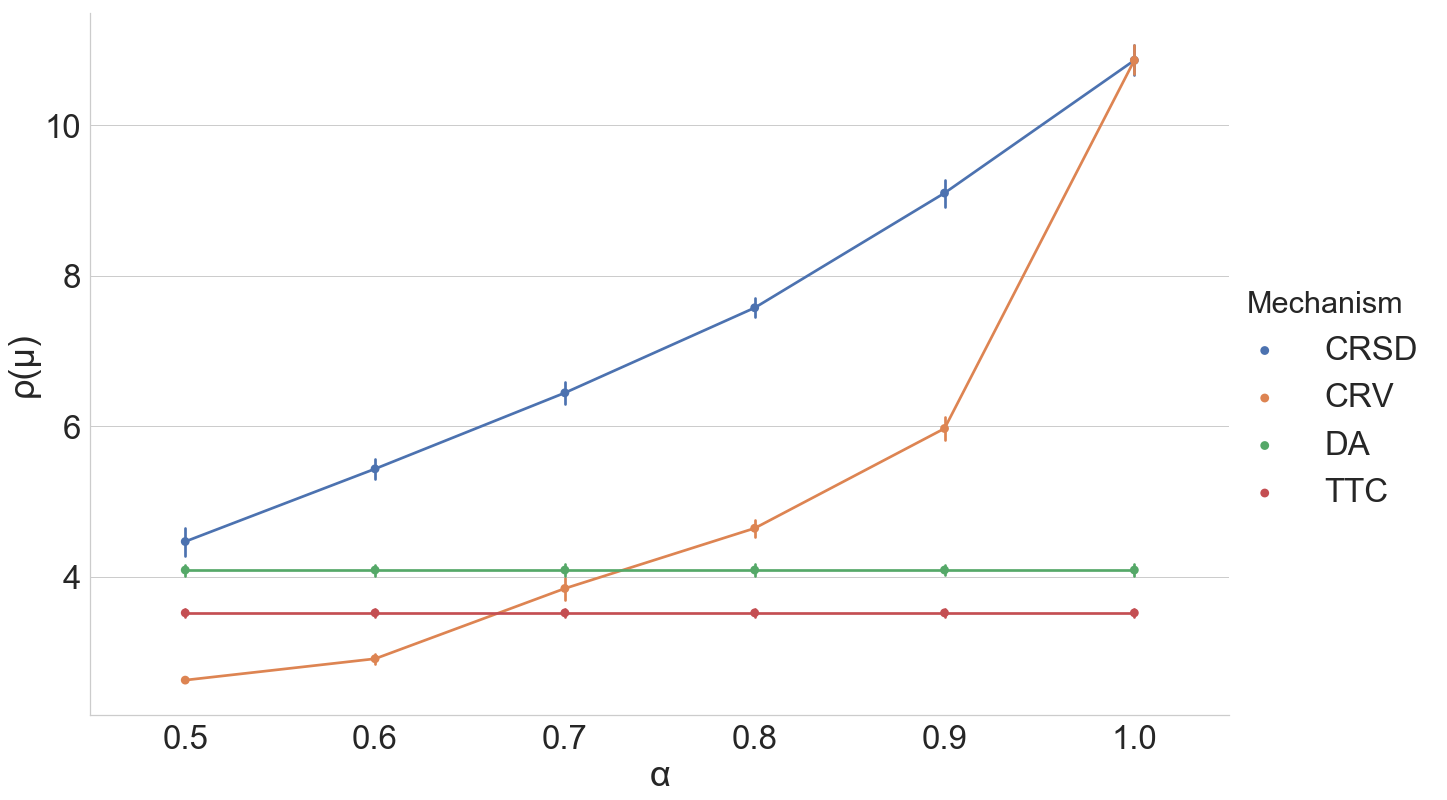}
\caption{Performance of CRSD, CRV, TTC and DA in terms of family welfare.}
\label{fig:average_rank_complete}
\end{figure}

\begin{figure}[H]
\centering
\includegraphics[scale=0.25]{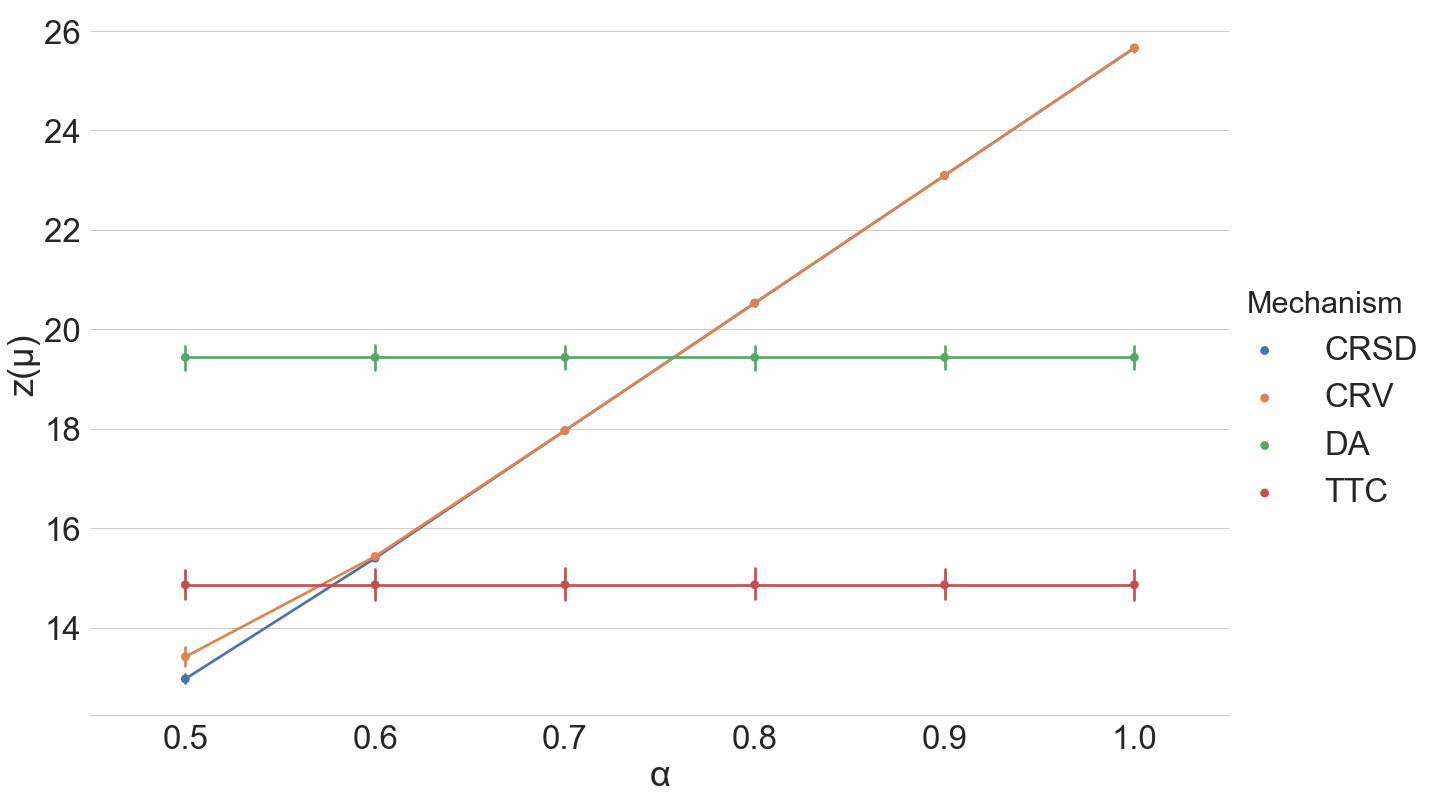}
\caption{Performance of CRSD, CRV, TTC and DA in terms of employment probability.}
\label{fig:employment_complete}
\end{figure}

For CRSD and CRV the average rank $\rho(\mu)$ strictly decreases (and thus family welfare strictly increases) with $\alpha$ going to 0.
Keep in mind that for $\alpha = 1$, both CRSD and CRV produce a government-optimal matching. 
The average rank of CRV matchings strictly dominates the average rank of CRSD matchings for fixed $\alpha$, which is what we would expect. However, CRSD easily outperforms the government-optimal matching, even for values of $\alpha$ close to 1. 
In this concrete setting, $\rho(\mu)$ can be decreased by almost 2 for CRSD and almost 5 for CRV, even if placement officers are only willing to sacrifice 10\% of the predicted overall employment rate. 
Although both TTC and DA achieve high family welfare, remember that these mechanisms cannot give any guarantees in terms of the government objective, as is shown by Figure \ref{fig:employment_complete}.

\begin{figure}[H]
\centering
\includegraphics[scale=0.25]{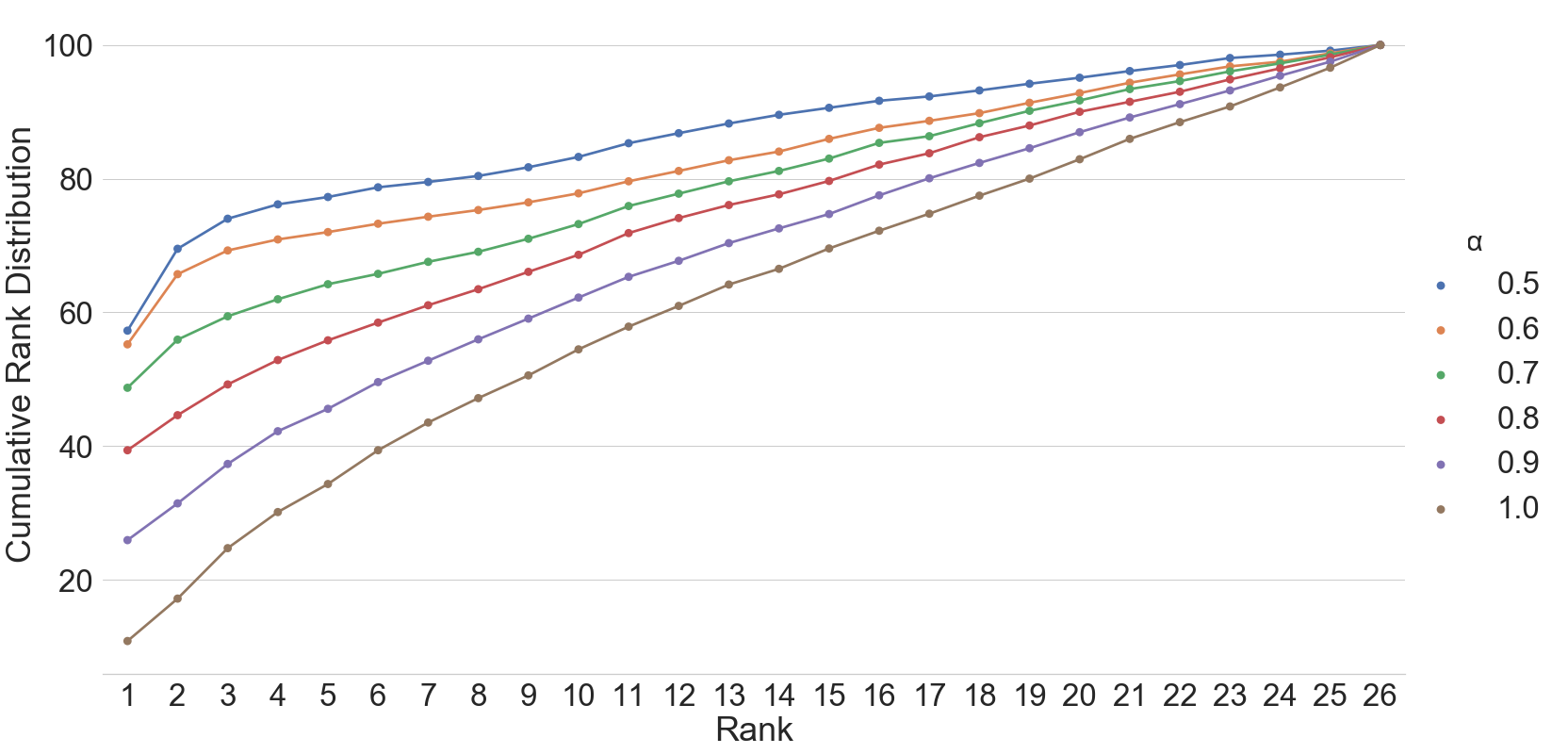}
\caption{Cumulative rank distribution of CRSD.}
\label{fig:cum_rank_crsd}
\end{figure}

\begin{figure}[H]
\centering
\includegraphics[scale=0.25]{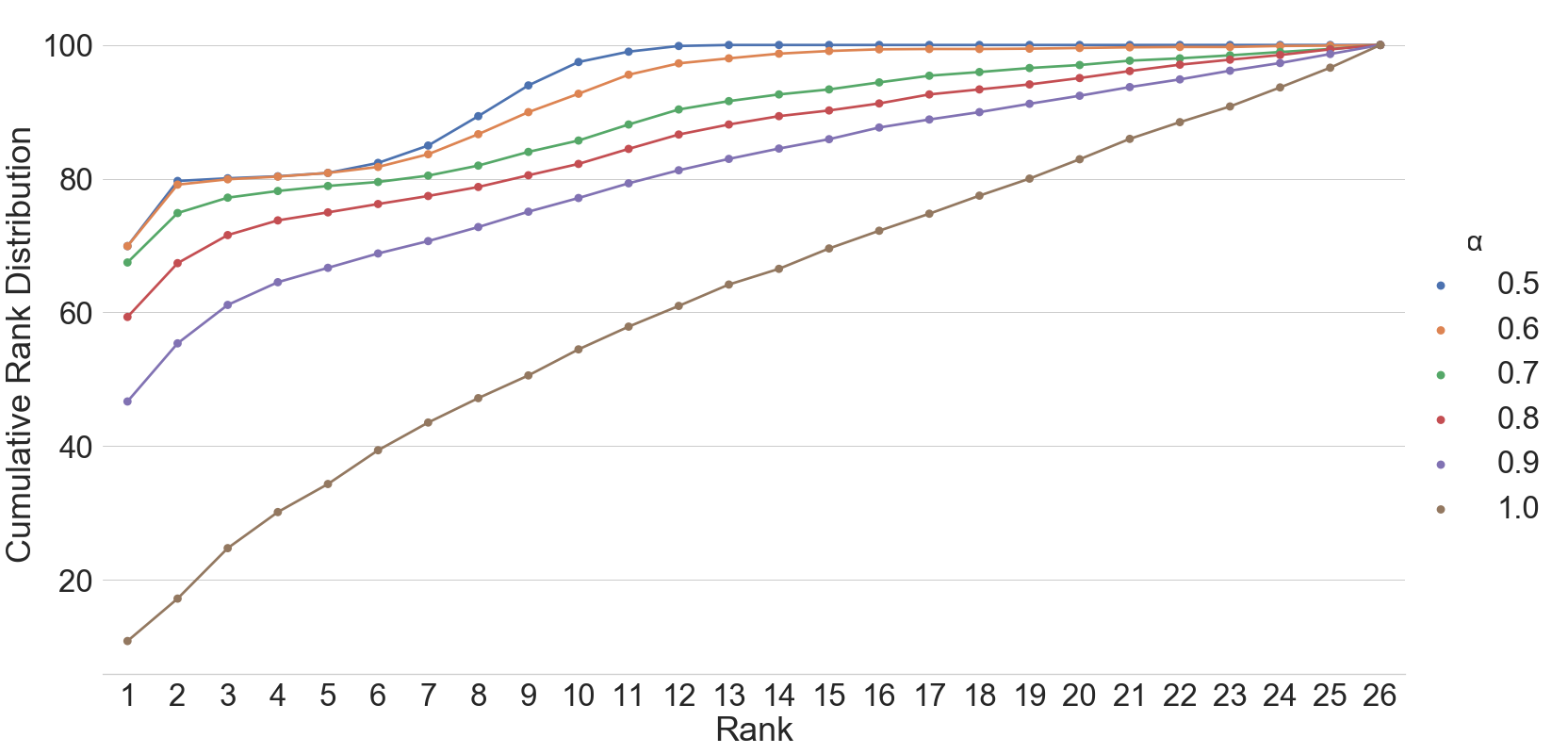}
\caption{Cumulative rank distribution of CRV.}
\label{fig:cum_rank_crv}
\end{figure}

The cumulative rank distributions of CRSD and CRV provide a more detailed description on how family welfare improves with smaller values of $\alpha$. 
As Figure \ref{fig:cum_rank_crsd} and Figure \ref{fig:cum_rank_crv} show, the number of families that are assigned to their first choice can be increased by a factor of roughly 2.4 using CRSD and by a factor of roughly 4.3 using CRV by giving up just 10\% of the government-optimal solution value.

\section{Conclusion}

In this paper, we have proposed two mechanisms, CRSD and CRV, capable of considering families' preferences while simultaneously respecting tight lower bounds on the overall predicted employment rate. While CRSD is family-strategyproof, our simulations show that CRV is in general superior in terms of family welfare when families have complete preference orders over locations. In the case of incomplete preference orders, these results are qualitatively the same and even more pronounced (see Section \ref{incomplete_preferences} in the appendix for details).

Both mechanisms require refugee families to have previously formed preferences over resettlement locations. Helping families form beliefs over where they are most likely to thrive should have a positive impact on resettlement outcomes. Therefore, future work should explore how families can extend incomplete preference orders. For example, a system which asks refugees to rank properties of cantons could derive preference orders for them (see \cite{delacretaz_matching_2020}). Additionally, both mechanisms should be evaluated on real-world data.

\section*{Acknowledgements}

We thank Stefania Ionescu and Jakob Weissteiner for insightful discussions. This paper is part of a project that has received funding from the European Research Council (ERC) under the European Union’s Horizon 2020 research and innovation programme (Grant agreement No. 805542).

\bibliographystyle{apalike}
\bibliography{references}

\section{Appendix}

\subsection{Incomplete Preference Orders}\label{incomplete_preferences}

We also compare the performance of CRSD and CRV in a setting where families only have incomplete preference orders, which is closer to a real-world setting. Here, we use a slightly different way to measure family welfare. Let $F_\mu^*$ denote the set of families that were matched to one of their ranked locations under $\mu$, i.e., $F_\mu^* = \{ i \in F \mid \mu(i) \in L_i \}$, where $L_i \subseteq L$ denotes the set of locations ranked by family $i$. When we compute the average rank $\rho(\mu)$ of a matching, we only consider families in $F_\mu^*$. 
Because $\rho(\mu)$ then only captures family welfare for families in $F_\mu^*$, we also look at $\tau(\mu)= |F \setminus F_\mu^*|$, which is the number of families that were not matched to any location in their preference order.

When families only have incomplete preferences orders, CRSD and CRV have to be slightly modified. When CRSD is run on instances with incomplete preference orders, it can happen that a family remains unmatched after the while loop in Line \ref{line:while-incomplete}. In our simulations, all these unmatched families are simply assigned government-optimally. In the case of CRV, we simply have to exclude a family-location pair $(i,j)$ from the objective function if $j$ was not ranked by $i$.

A families' preference order is generated just as in the complete preferences setting, but is cut off after position $\kappa_i$, where $\kappa_i$ is sampled from a $\Gamma(2,1.5)$-distribution. Because TTC and DA -- at least in their original design -- do not necessarily produce feasible matchings when families only have incomplete preferences, they are excluded from our analysis in this setting.

\begin{figure}[H]
\centering
\includegraphics[scale=0.25]{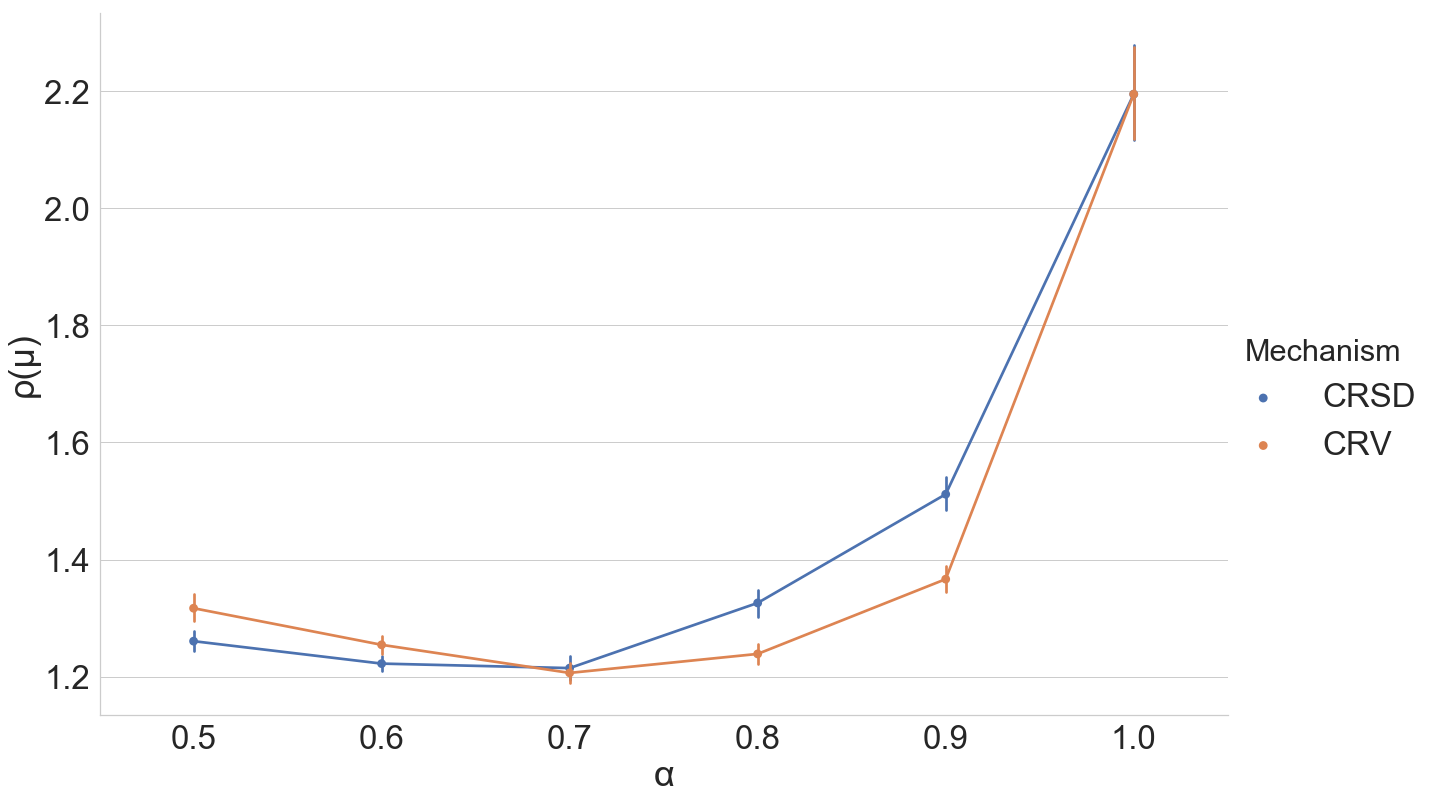}
\caption{$\rho(\mu)$ of CRSD and CRV for different values of $\alpha$.}
\label{fig:average_rank_incomplete}
\end{figure}

\begin{figure}[H]
\centering
\includegraphics[scale=0.25]{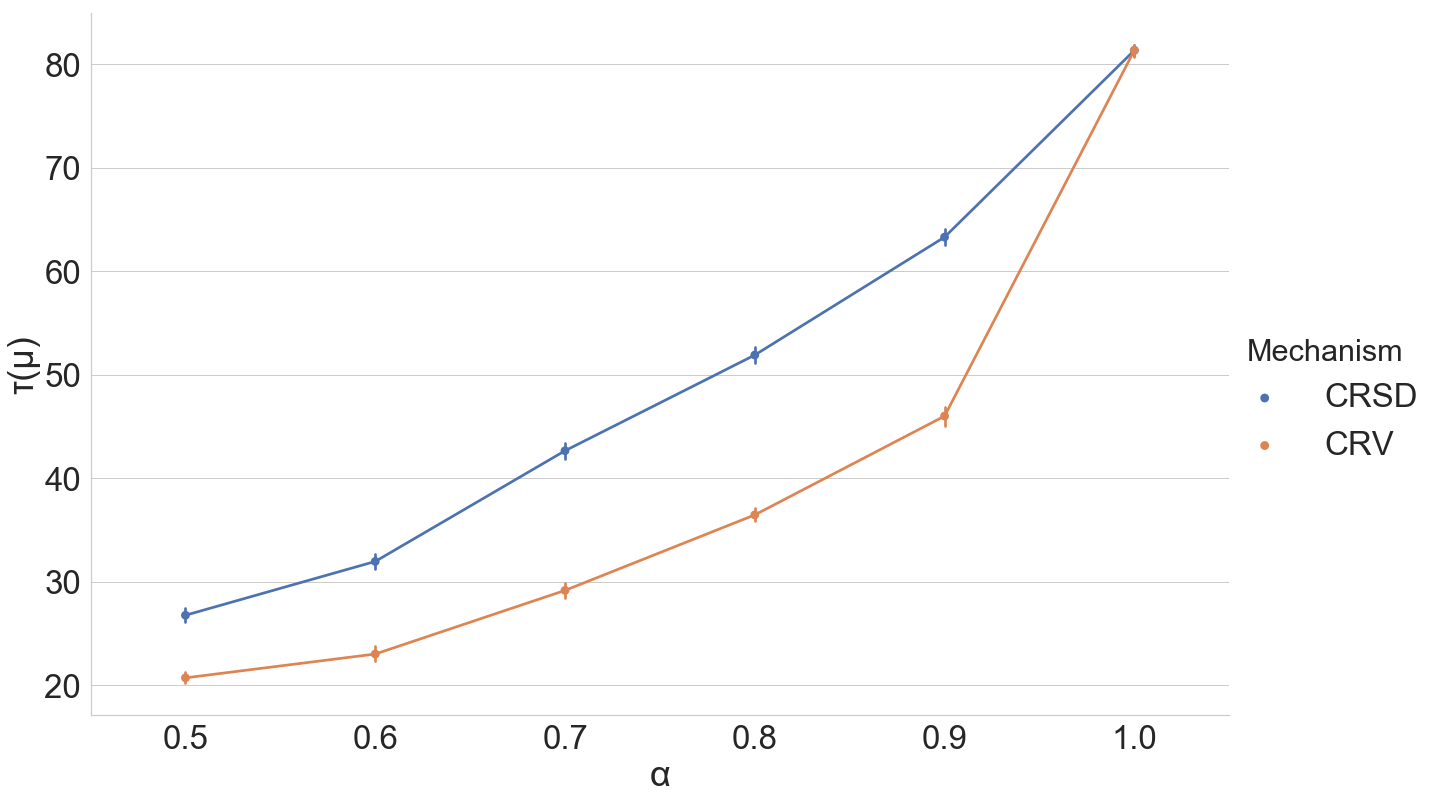}
\caption{$\tau(\mu)$ of CRSD and CRV for different values of $\alpha$.}
\label{fig:employment_incomplete}
\end{figure}

Again, we observe that the average rank strictly decreases for values of $\alpha$ close to 1. Decreasing the value of $\alpha$ for $\alpha \in \{0.5, 0.6, 0.7\}$ slightly increases the average rank again, which can be explained by a further reduction of $\tau(\mu)$, i.e., the number of families that are not assigned to a location in their preference order continues to go down.

\begin{figure}[H]
\centering
\includegraphics[scale=0.25]{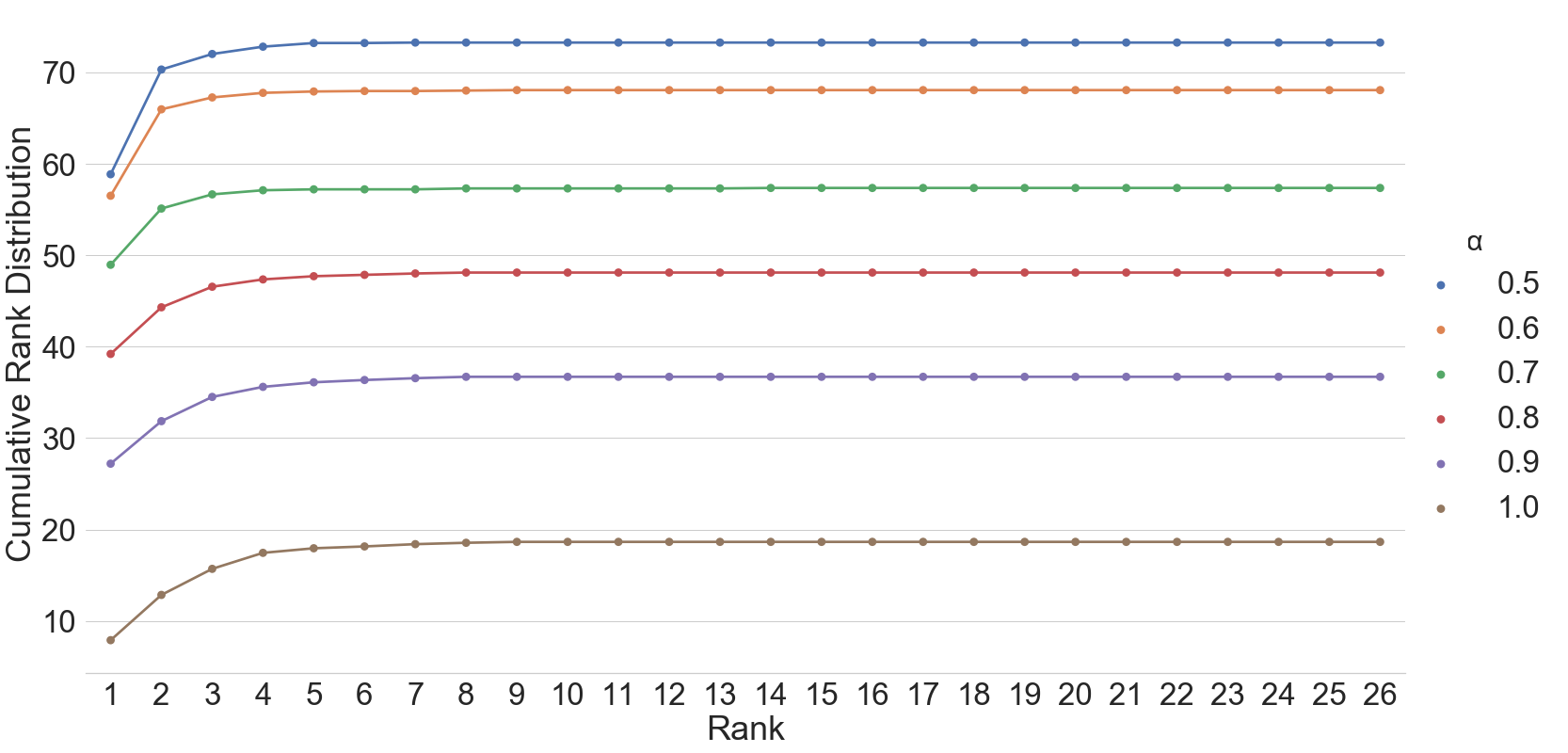}
\caption{Cumulative rank distribution of CRSD.}
\label{fig:cum_rank_crsd_incomplete}
\end{figure}

\begin{figure}[H]
\centering
\includegraphics[scale=0.25]{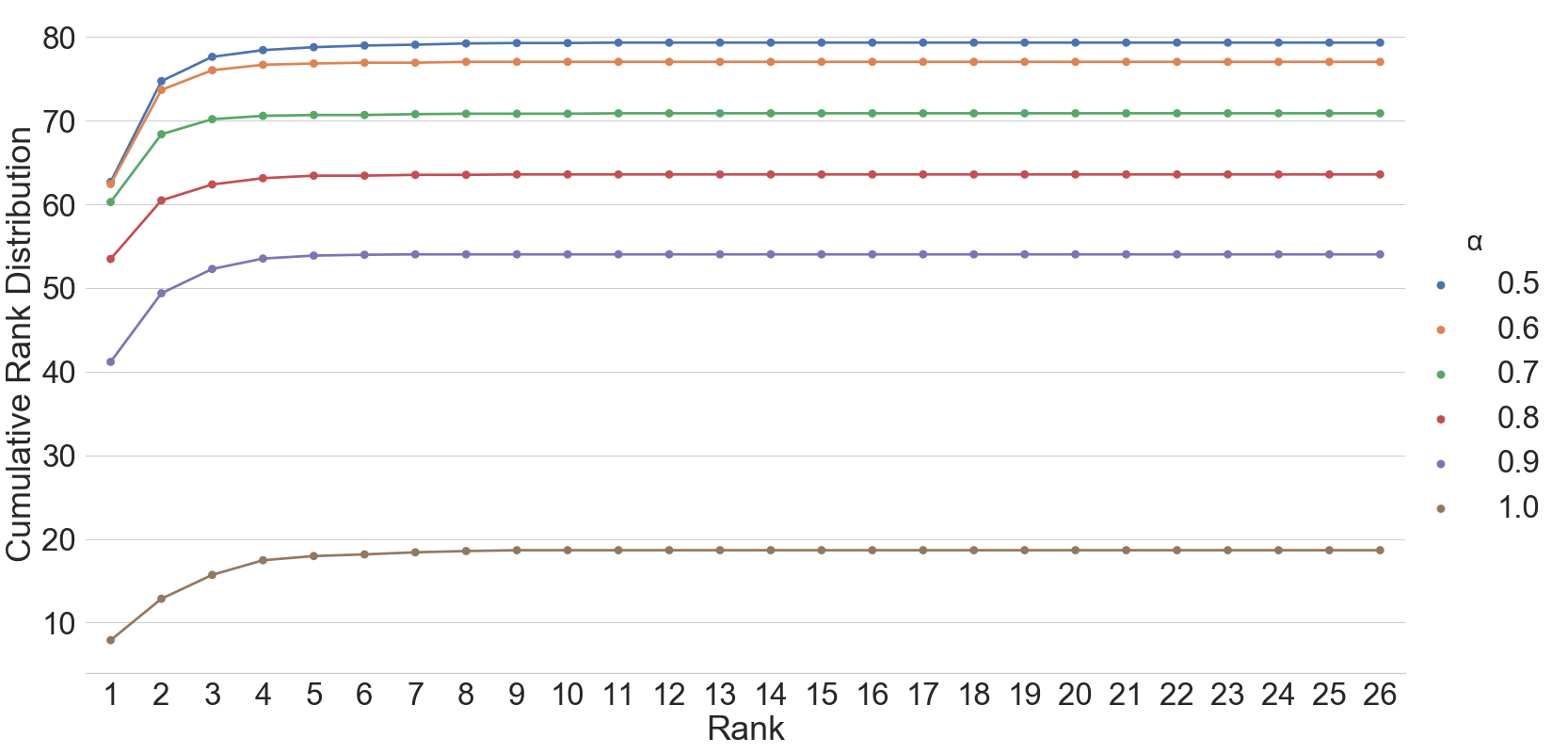}
\caption{Cumulative rank distribution of CRV.}
\label{fig:cum_rank_crv_incomplete}
\end{figure}

Choosing $\alpha = 0.9$ instead of $\alpha = 1$ increases $\Delta_1(\mu)$ from an average of $7.9$ to $27.2$, which corresponds to a factor of roughly $3.4$. Unsurprisingly, the effect is even stronger for CRV, where $\Delta_1(\mu)$ increases to $41.1$ (a factor of roughly $5.2$). Those numbers continue to grow with smaller values of $\alpha$.

\subsection{Simulations With Negative Correlation between $\pi$ and $u$} 

In Section \ref{generation}, $\pi$ and $u$ are mostly positively correlated, which is a reasonable assumption considering that families also care about finding a job. However, our simulations suggest that also in the case where $\pi$ and $u$ are mostly negatively correlated, CRSD and CRV can significantly improve family welfare.

\begin{table}[H]
\begin{center}
\begin{tabular}{ |l|c|c|c|c| }
\hline
\makecell{$\omega^\pi_{f\ell}$} & \makecell{$\ell_1$ (1)} & \makecell{$\ell_2$ (9)} & \makecell{$\ell_3$ (6)} & \makecell{$\ell_4$ (10)}\\
\hline
$f_1 (15)$ & 0.3 & 0.5 & 0.5 & 0.6\\
$f_2 (25)$ & 0.2 & 0.1 & 0.3 & 0.4\\
$f_3 (20)$ & 0.2 & 0.3 & 0.1 & 0.4\\
$f_4 (40)$ & 0.2 & 0.2 & 0.2 & 0.2\\
\hline
\end{tabular}
\end{center}
\caption {Values for generating predicted employment probabilities. For family $i$ of type $f$ and location $j$ of type $\ell$, $\pi_{ij} \sim \mathcal{U}(0, \omega^\pi_{f\ell})$.}\label{table:gen-pi-neg}
\end{table}

\begin{table}[H]
\begin{center}
\begin{tabular}{ |l|c|c|c|c| }
\hline
\makecell{$\omega^u_{f\ell}$} & \makecell{$\ell_1$ (1)} & \makecell{$\ell_2$ (9)} & \makecell{$\ell_3$ (6)} & \makecell{$\ell_4$ (10)}\\
\hline
$f_1 (15)$ & 1.0 & 0.6 & 0.6 & 0.3\\
$f_2 (25)$ & 0.8 & 1.0 & 0.6 & 0.3\\
$f_3 (20)$ & 0.8 & 0.6 & 1.0 & 0.3\\
$f_4 (40)$ & 1.0 & 0.6 & 0.6 & 0.3\\
\hline
\end{tabular}

\end{center}
\caption {Values for generating preference orders. For family $i$ of type $f$ and location $j$ of type $\ell$, $u_{ij} \sim \mathcal{U}(0, \omega^u_{f\ell})$.}\label{table:gen-val-neg}
\end{table}

\begin{figure}[H]
\centering
\includegraphics[scale=0.25]{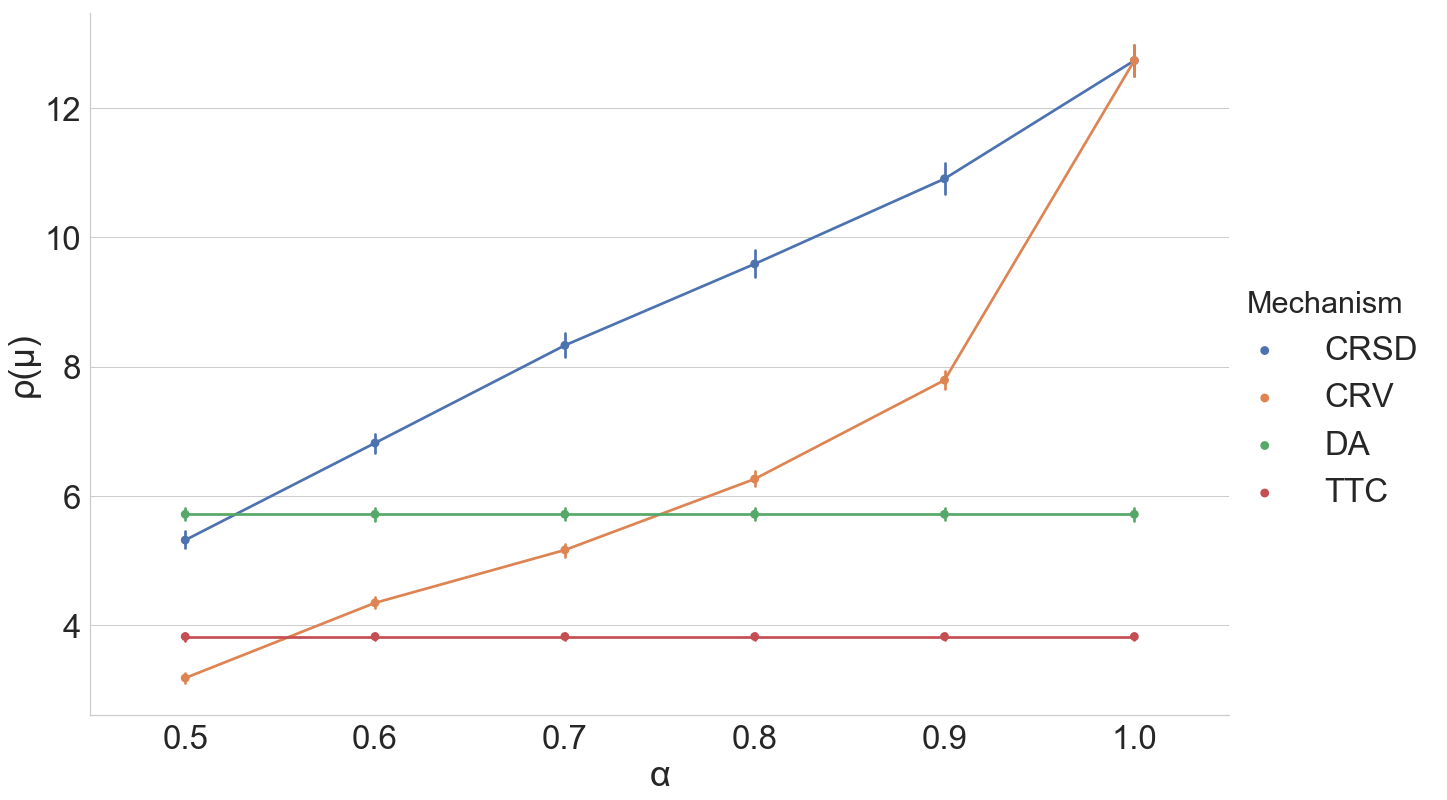}
\caption{Performance of CRSD, CRV, TTC and DA in terms of family welfare.}
\label{fig:average_rank_complete_neg_cor}
\end{figure}

\begin{figure}[H]
\centering
\includegraphics[scale=0.25]{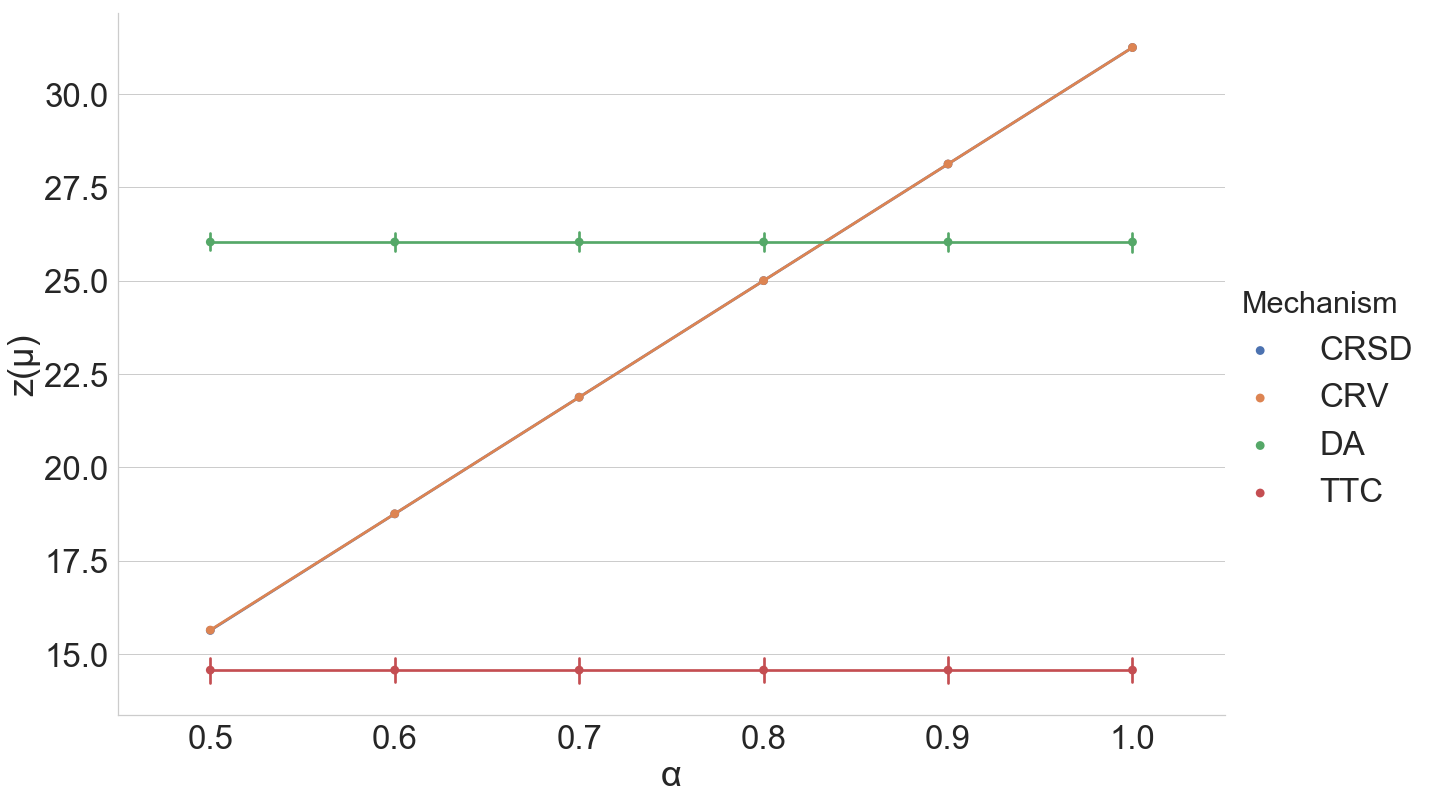}
\caption{Performance of CRSD, CRV, TTC and DA in terms of employment probability.}
\label{fig:employment_complete_neg_cor}
\end{figure}

This does not only hold for the setting with complete preference orders, but also for the setting with incomplete preference orders.

\begin{figure}[H]
\centering
\includegraphics[scale=0.25]{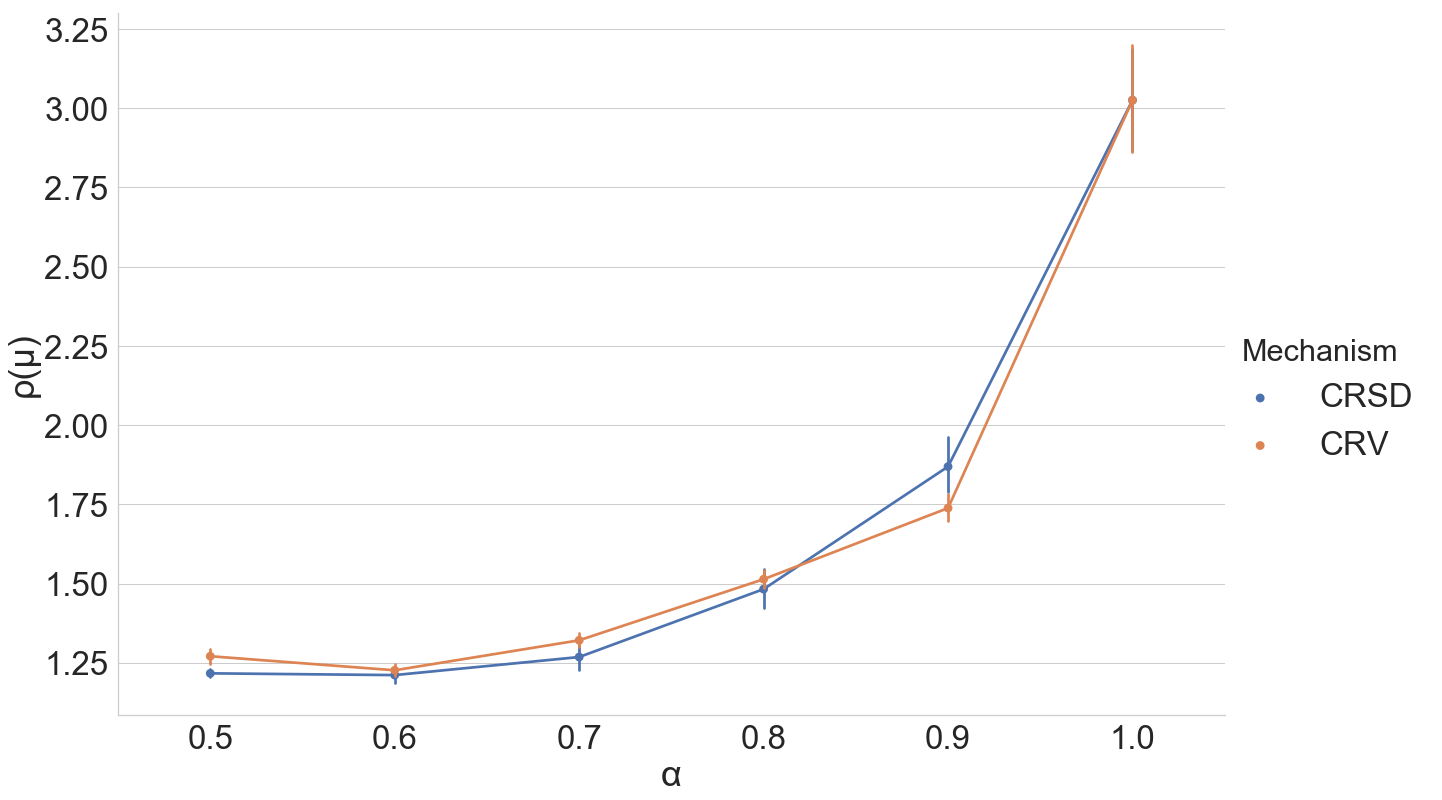}
\caption{$\rho(\mu)$ of CRSD and CRV for different values of $\alpha$.}
\label{fig:average_rank_incomplete_neg_cor}
\end{figure}

\begin{figure}[H]
\centering
\includegraphics[scale=0.25]{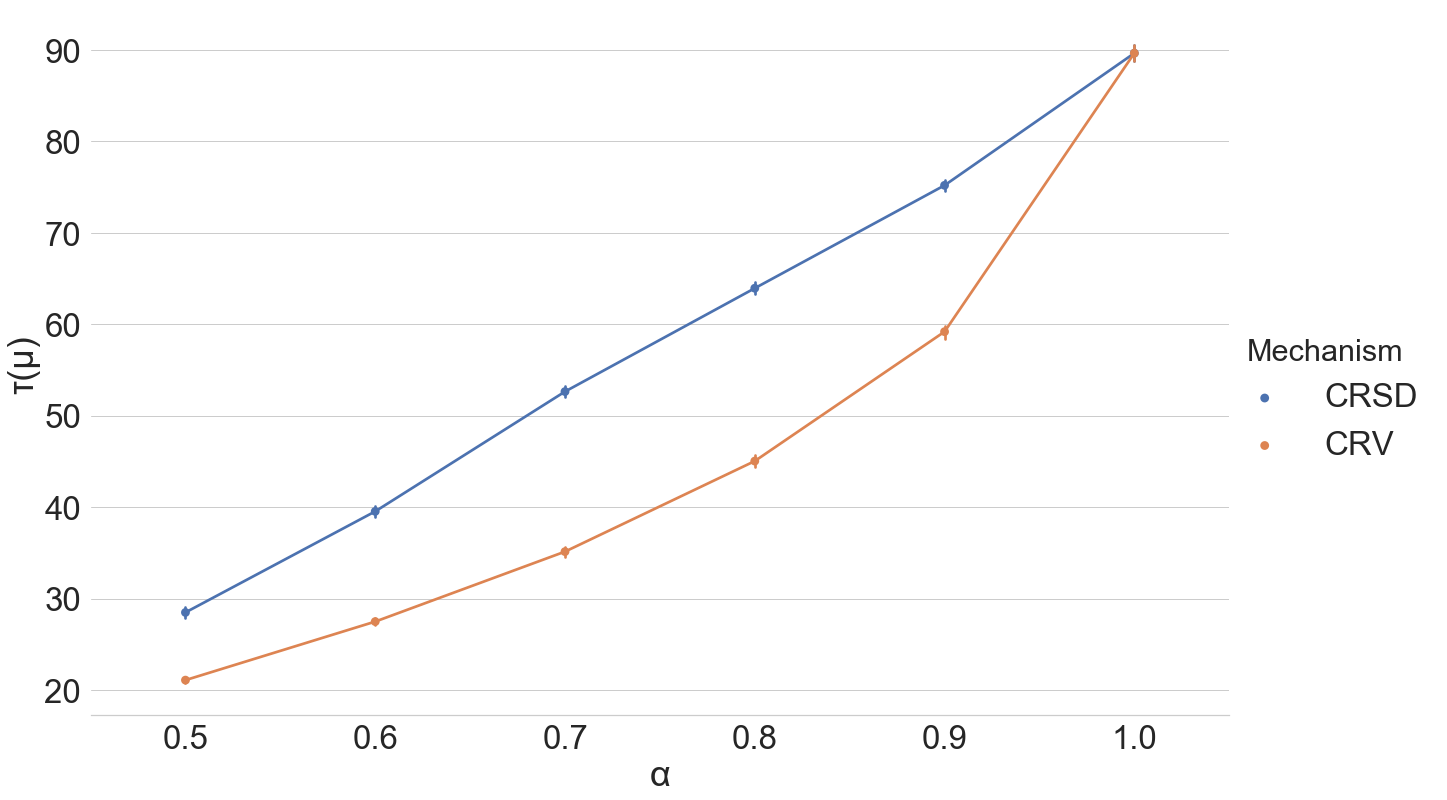}
\caption{$\tau(\mu)$ of CRSD and CRV for different values of $\alpha$.}
\label{fig:employment_incomplete_neg_cor}
\end{figure}

\end{document}